\numberwithin{equation}{section}
\newtheorem{theorem}{Theorem}[section]
\newtheorem{lemma}[theorem]{Lemma}
\newtheorem{corollary}[theorem]{Corollary}
\newenvironment{definition}[1][Definition.]{\begin{trivlist}
\item[\hskip \labelsep {\bfseries #1}]}{\end{trivlist}}
\newcommand{\threepartdef}[6]
{
	\left\{
		\begin{array}{lll}
			#1 & \mbox{if } #2 \\
			#3 & \mbox{if } #4 \\
			#5 & \mbox{if } #6
		\end{array}
	\right.
}
\begin{document}
\bibliographystyle{plain}
\title{Bounded Turing Reductions and Data Processing Inequalities for Sequences \thanks{This research was supported in part by National Science Foundation Grants 1247051 and 1545028. A preliminary version of part of this work was presented at the 11th International Conference on Computability, Complexity, and Randomness.}}
\author{Adam Case\\
Department of Computer Science\\
Iowa State University\\
Ames, IA 50011 USA}
\date{}
\maketitle

\begin{abstract}
A \emph{data processing inequality} states that the quantity of shared information between two entities (e.g. signals, strings) cannot be significantly increased when one of the entities is processed by certain kinds of transformations. In this paper, we prove several data processing inequalities for sequences, where the transformations are bounded Turing functionals and the shared information is measured by the lower and upper mutual dimensions between sequences.

We show that, for all sequences $X,Y,$ and $Z$, if $Z$ is computable Lipschitz reducible to $X$, then
\[
mdim(Z:Y) \leq mdim(X:Y) \text{ and } Mdim(Z:Y) \leq Mdim(X:Y).
\]
We also show how to derive different data processing inequalities by making adjustments to the computable bounds of the \emph{use} of a Turing functional.

The \emph{yield} of a Turing functional $\Phi^S$ with access to at most $n$ bits of the oracle $S$ is the smallest input $m \in \mathbb{N}$ such that $\Phi^{S \upharpoonright n}(m)\uparrow$. We show how to derive \emph{reverse data processing inequalities} (i.e., data processing inequalities where the transformation may significantly \emph{increase} the shared information between two entities) for sequences by applying computable bounds to the yield of a Turing functional.
\end{abstract}
\section{Introduction}
Various branches of information theory have developed methods for measuring the shared information between two objects. It is expected that a measure of mutual information satisfy certain properties (e.g., see \cite{jBell62}). Perhaps the most important property is the \emph{data processing inequality}, which says that the quantity of shared information between two objects cannot be significantly increased when one of the objects is processed by certain transformations \cite{bCovTho06}.

In algorithmic information theory, if $f:\Sigma^* \rightarrow \Sigma^*$ is a partial computable function, then there is a constant $c \in \mathbb{N}$ such that, for all strings $x,y \in \Sigma^*$,
\begin{align}\label{stringdpi}
I(f(x):y) \leq I(x:y) + c,
\end{align}
where $I(x:y) = K(y) - K(y\,|\,x)$ is the \emph{algorithmic mutual information} between strings $x$ and $y$ \cite{bLiVit08}. While (\ref{stringdpi}) is a data processing inequality for strings, there still exist settings within algorithmic information theory that do not have known data processing inequalities.

In this paper, we discuss several new data processing inequalities for sequences. We use \emph{mutual dimension}, a recent development in \emph{constructive dimension}, as the means for measuring the quantity of shared information between two sequences \cite{CasLut15,pCasLut15}. Lutz defined and explored the constructive dimension of sequences in \cite{jLutz03a}, and Mayordomo showed that constructive dimension can be characterized in terms of Kolmogorov complexity in \cite{jMayo02}. Mutual dimension is a generalization of constructive dimension and is defined in terms of algorithmic mutual information. Formally, the \emph{lower} and \emph{upper} mutual dimensions between sequences $S \in \Sigma^{\infty}$ and $T \in \Sigma^{\infty}$ are defined by
\[
mdim(S:T) = \displaystyle\liminf\limits_{n \rightarrow \infty}\frac{I(S \upharpoonright n:T \upharpoonright n)}{n\log|\Sigma|}
\]
and
\[
Mdim(S:T) = \displaystyle\limsup\limits_{n \rightarrow \infty}\frac{I(S \upharpoonright n:T \upharpoonright n)}{n\log|\Sigma|},
\]
respectively. Intuitively, these are the lower and upper \emph{densities} of algorithmic mutual information between $S$ and $T$. Originally, Case and Lutz defined the lower and upper mutual dimensions between points in Euclidean space and showed that they have \emph{all} of the expected properties that a measure of mutual information should have, including a data processing inequality \cite{CasLut15}. In a recent follow-up paper, the same authors extend this notion of mutual dimension to sequences and proved that it has several desirable properties \cite{pCasLut15}. However, no discussion regarding data processing inequalities for sequences was provided. Our primary goal in the present paper is to analyze how the lower and upper mutual dimensions between two sequences change when one of the sequences is transformed by a Turing functional.

A reduction can be described in several ways. Generally speaking, a problem $A$ reduces to a problem $B$ if $A$ is solvable when assuming that $B$ is solvable. In computability theory, Turing reductions are used to discuss the idea of relative computability. Formally, a sequence $S$ is \emph{Turing reducible} to a sequence $T$ if there exists an oracle machine that computes $S$ when $T$ is written on the oracle tape. We often refer to oracle machines as Turing functionals, which have been studied in detail by Rogers \cite{bRogers87} and Soare \cite{bSoare87,Soare09}. When a Turing functional $\Phi^S$ runs on a particular input, it is allowed to query the oracle $S$ at any time. The \emph{use} of a Turing functional is the largest position of the oracle tape that is queried during the computation of $\Phi^S$ on input $n$. We will be primarily concerned with Turing functionals whose use is bounded by a computable function.

Downey, Hirshfeldt, and LaForte first defined \emph{sw-reducibility} (strong weak truth table reducibility) as a Turing reduction whose use is bounded by $n + c$, where $n \in \mathbb{N}$ is the input and $c$ is a constant \cite{DoHiLa04}. The authors showed that, for all sequences $S$ and $T$, if $T$ is sw-reducible to $S$, then, for all $n \in \mathbb{N}$,
\begin{align*}
K(T \upharpoonright n) \leq K(S \upharpoonright n) + O(1).
\end{align*}
A sw-reduction is now referred to as a \emph{computable Lipschitz reduction} (\emph{cl-reduction}) because all Turing functionals whose use is bounded by $n + c$ can be viewed as an effective Lipschitz continuous function \cite{LewBar06a,LewBar06b}.

In section 3, we discuss data processing inequalities for sequences, where transformations are represented by Turing functionals with bounded use. Our main result of this section says that, for all sequences $X,Y,Z \in \Sigma^{\infty}$, if $Z$ is cl-reducible to $X$, then
\[
mdim(Z:Y) \leq mdim(X:Y)
\]
and
\[
Mdim(Z:Y) \leq Mdim(X:Y).
\]

We also show that, for all $\alpha \geq 1$, if $Z$ is reducible to $X$ via a functional $\Phi$ whose use is bounded by $\lceil \alpha(n + c) \rceil$, for all inputs $n \in \mathbb{N}$, then
\[
mdim(Z:Y) \leq \alpha \cdot mdim(X:Y)
\]
and
\[
Mdim(Z:Y) \leq \alpha \cdot Mdim(X:Y).
\]
We then provide weaker versions of the above inequalities stated in terms of the Turing functionals themselves.

In section 4, we explore \emph{reverse data processing inequalities} for sequences, i.e., data processing inequalities where the transformation may significantly \emph{increase} the amount of shared information between two objects. Unlike the data processing inequalities described above, we cannot derive reverse data processing inequalites by restricting how much of the oracle a Turing functional accesses. Instead, we place restrictions on the lengths of the strings that a Turing functional outputs.

In \cite{Gacs86}, G\'{a}cs analyzed the lengths of the outputs of monotonic operators, which are also used to describe Turing reductions. Similarly, we are interested in examining the lengths of the strings output by a Turing functional equipped with a finite oracle. We define the \emph{yield} of a Turing functional $\Phi^S$ with access to at most $n \in \mathbb{N}$ bits of the oracle $S$, denoted $\phi^S_{yield}(n)$, to be the smallest input $m \in \mathbb{N}$ such that $\Phi^{S \upharpoonright n}(m)\uparrow$. 

We say that a sequence $T$ is \emph{uniquely yield bounded reducible} (\emph{uyb-reducible}) to a sequence $S$ if there exists a Turing functional $\Phi$ such that,
\begin{enumerate}
\item if the first $\phi^S_{yield}(n)$ symbols of $\Phi^S$ is a prefix of $\Phi^T$, then the first $n$ symbols of $S$ is a prefix of $T$, and
\item $\phi^S_{yield}(n)$ is bounded by a computable function.
\end{enumerate}
Our main result of this section says that, for all sequences $X,Y,Z \in \Sigma^{\infty}$, if $Z$ is uyb-reducible to $X$ via a functional $\Phi$ such that $\phi^X_{yield}(n) \leq n + c$, for some constant $c \in \mathbb{N}$, then
\[
mdim(X:Y) \leq mdim(Z:Y)
\]
and
\[
Mdim(X:Y) \leq Mdim(Z:Y).
\]
We also show that, for all $\alpha \geq 1$, if $Z$ is uyb-reducible to $X$ via a functional $\Phi$ such that $\phi^X_{yield}(n) \leq \lceil \alpha(n + c) \rceil$, for all inputs $n \in \mathbb{N}$, then
\[
mdim(X:Y) \leq \alpha \cdot mdim(Z:Y)
\]
and
\[
Mdim(X:Y) \leq \alpha \cdot Mdim(Z:Y).
\]
\section{Preliminaries}
We begin by discussing several formal definitions and concepts related to Turing reductions, Kolmogorov complexity, and constructive dimension. Let $\mathbb{N} = \{0,1,2,\cdots\}$, $\Sigma=\{0,1,\ldots k-1\}$ be the \emph{alphabet} consisting of $k$ symbols, and $\Sigma^*$ be the set of all strings over $\Sigma$. We write $\Sigma^{\infty}$ for the set of all infinite sequences over $\Sigma$, and, for every $S \in \Sigma^{\infty}$ and $n \in \mathbb{N}$, $S[n]$ is the $n$th symbol of $S$ and $S \upharpoonright n$ denotes the first $n$ symbols of $S$. For all strings $x,y \in \Sigma^*$ and sequences $S \in \Sigma^{\infty}$, we write $x \sqsubseteq S$ and $x \sqsubseteq y$ to mean that $x$ is a \emph{prefix} of $S$ and $x$ is a prefix of $y$, respectively.

Oracle machines are used as a means of carrying out \emph{relative} computations, i.e., computations performed by Turing machines with access to an additional source of information provided by the oracle. An oracle machine is a Turing machine equipped with an additional read-only tape called the \emph{oracle tape}. We write $M^S$ to denote an oracle machine with sequence $S$ written on its oracle tape. Given an input $n \in \mathbb{N}$, an oracle machine will either halt or run forever. If the oracle machine halts on a given input, then it must query the oracle tape a finite number of times.

It is often useful to provide an oracle tape with a string rather than a sequence. The behavior of a machine $M$ with a string oracle $x \in \Sigma^*$ is identical to that of a sequence oracle $S \in \Sigma^{\infty}$, except that if the machine attempts to query a position of the oracle tape that is larger than $|x| - 1$, the machine immediately enters a looping state and runs forever.

The following notations and definitions can be found in \cite{uSpies11,bRogers87,Soare09}. We may disassociate an oracle machine $M$ from any particular oracle and refer to it as a partial function $\Phi_M: \Sigma^{\infty} \times \mathbb{N} \rightarrow \Sigma^*$ defined by $\Phi_M(S,n) = M^S(n)$. Each $\Phi_M$ is called a \emph{Turing functional}. The partial function $\Phi_M^S:\mathbb{N} \rightarrow \Sigma^*$ is defined by $\Phi_M^S(n) = \Phi_M(S,n)$, and we write $\Phi^S_M(n)\downarrow$ if $M^S$ halts on input $n$ and $\Phi^S_M(n)\uparrow$ if $M^S$ does not halt on input $n$.

For any two sequences $S$ and $T$ and any oracle machine $M$, we write $\Phi^S_M = T$ if, for all $n \in \mathbb{N}$,
\[
\Phi_M^S(n) = T \upharpoonright n.
\]

We say that $T$ is \emph{Turing reducible to} $S$ if there exists an oracle machine $M$ such that $\Phi_M^S = T$.

For the rest of this paper, we omit the $M$ in $\Phi_M$ and $\Phi_M^S$ and denote an arbitrary Turing functional by $\Phi$ and an arbitrary Turing functional with oracle $S$ by $\Phi^S$.

We now provide a brief overview of the basics of \emph{Kolmogorov complexity}. Specifically, we are interested in \emph{prefix-free Kolmogorov complexity}. Therefore, all Turing machines used in the following definitions will be self-delimiting.

Let $M$ be an arbitrary Turing machine. The \emph{conditional Kolmogorov complexity} of $x \in \Sigma^*$ \emph{given} $y \in \Sigma^*$ with respect to $M$ is
\[
K_M(x\,|\,y) = \min\{|\pi|\,\big|\,\pi \in \{0,1\}^* \text{ and } M(\pi,y) = x\}.
\]

The \emph{Kolmogorov complexity} of $x \in \Sigma^*$ with respect to $M$ is $K_M(x) = K_M(x\,|\,\lambda)$, where $\lambda$ is the \emph{empty string}. We say that a Turing machine $M'$ is \emph{optimal} if, for every Turing machine $M$, there is a constant $c_M \in \mathbb{N}$ such that, for all $x \in \Sigma^*$,
\[
K_{M'}(x) \leq K_M(x) + c_M,
\]
where $c_M$ is called an \emph{optimality constant} of $M$. An important fact in algorithmic information theory is that every universal Turing machine is optimal \cite{bLiVit08}. Therefore, we fix a particular universal Turing machine $U$ that we reference for the entirety of this paper and define the \emph{Kolmogorov complexity} of $x \in \Sigma^*$ by $K(x) = K_U(x)$ and the \emph{conditional} \emph{Kolmogorov complexity} of $x$ \emph{given} $y$ by $K(x\,|\,y) = K_U(x\,|\,y)$.

We define the \emph{joint Kolmogorov complexity} of $x \in \Sigma^*$ and $y \in \Sigma^*$ by $K(x,y) = K(\langle x,y \rangle)$, where $\langle \cdot \rangle$ is a string pairing function. The \emph{mutual information} between strings $x$ and $y$ is
\[
I(x:y) = K(y) - K(y\,|\,x),
\]
which is the quantity of algorithmic information that $x$ and $y$ share. For a more thorough discussion on this topic, see \cite{bLiVit08}.
\section{Turing Functionals with Bounded Use and Data Processing Inequalities}
In this section, we develop data processing inequalities for sequences and show how these inequalities change when applying different computable bounds to the \emph{use} of a Turing functional. First, we prove several supporting lemmas.

\begin{lemma}\label{cond}
There exists a constant $c \in \mathbb{N}$ such that, for all $u,v,w \in \Sigma^*$,
\[
K(u\,|\,vw) \leq K(u\,|\,v) + K(|v|) + c.
\]
\end{lemma}

\begin{proof}
Let $M$ be a TM such that, if $U(\pi_1) = |v|$ and $U(\pi_2,v) = u$,
\[
M(\pi_1\pi_2,vw) = u.
\]
Let $c_M \in \mathbb{N}$ be an optimality constant of $M$. Assume the hypothesis, and let $\pi_1$ be a minimum-length program for $|v|$ and $\pi_2$ be a minimum-length program for $u$ given $v$. By optimality,
\begin{align*}
K(u\,|\,vw) &\leq K_M(u\,|\,vw) + c_M\\
						&\leq |\pi_1\pi_2| + c_M\\
						&= K(u\,|\,v) + K(|v|) + c,
\end{align*}
where $c = c_M$. \qedhere
\end{proof}

\begin{corollary}\label{u:uv}
For all $u,v,w \in \Sigma^*$,
\[
I(u:w) \leq I(uv:w) + o(|u|).
\]
\end{corollary}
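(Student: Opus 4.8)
The plan is to reduce the claimed inequality to a statement about conditional complexities and then invoke Lemma~\ref{cond}. First I would expand both mutual informations using the definition $I(x:w) = K(w) - K(w\,|\,x)$. Since the unconditional term $K(w)$ occurs on both sides, it cancels, and the target inequality $I(u:w) \leq I(uv:w) + o(|u|)$ becomes equivalent to
\[
K(w\,|\,uv) - K(w\,|\,u) \leq o(|u|),
\]
that is, to showing that conditioning on the longer string $uv$ rather than on $u$ alone cannot decrease $K(w\,|\,\cdot)$ by more than a sublinear amount in $|u|$.

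Next I would apply Lemma~\ref{cond}, matching the corollary's strings to the three arguments of the lemma. Instantiating the lemma's first argument to $w$, its retained conditioning string to $u$, and its appended block to $v$ (so that the concatenation is exactly $uv$), the lemma supplies a single constant $c \in \mathbb{N}$ with
\[
K(w\,|\,uv) \leq K(w\,|\,u) + K(|u|) + c
\]
for all $u,v,w \in \Sigma^*$. Combining this with the reduction above yields $I(u:w) - I(uv:w) \leq K(|u|) + c$.

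Finally I would observe that the right-hand side is $o(|u|)$. The prefix-free Kolmogorov complexity of a natural number satisfies $K(n) \leq \log n + O(\log \log n)$, so $K(|u|) = O(\log|u|)$, and the additive constant $c$ is trivially sublinear; hence $K(|u|) + c = o(|u|)$ as $|u| \to \infty$, which closes the argument.

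I expect no serious obstacle here. The only point requiring genuine care is the bookkeeping when aligning the corollary's strings with Lemma~\ref{cond}: one must retain $u$ (not $v$) in the conditioning and treat $v$ as the appended block, so that the penalty term is $K(|u|)$ rather than $K(|v|)$. This is precisely the choice that makes the penalty $O(\log|u|)$ and therefore $o(|u|)$; reversing the roles would yield a bound in terms of $|v|$, which is not what the statement asserts.
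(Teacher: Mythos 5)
Your proposal is correct and follows essentially the same route as the paper's own proof: expand $I(\cdot:w)$ by its definition, apply Lemma~\ref{cond} with exactly the instantiation you describe (target $w$, retained conditioning string $u$, appended block $v$, giving the penalty $K(|u|)+c$), and conclude via $K(|u|) = O(\log|u|) = o(|u|)$. Your closing remark about keeping $u$ (not $v$) in the conditioning is precisely the bookkeeping the paper's one-line chain performs implicitly, so there is nothing to fix.
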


\begin{proof}
By the definition of mutual information and Lemma \ref{cond}, there exists a constant $c \in \mathbb{N}$ such that
\begin{align*}
I(u:w) &= K(w) - K(w\,|\,u)\\
			 &\leq K(w) - K(w\,|\,uv) + K(|u|) + c\\
			 &= I(uv:w) + o(|u|). \qedhere
\end{align*}
\end{proof}

The following lemma was proven in \cite{pCasLut15}.

\begin{lemma}\label{sym}
For all strings $u,w \in \Sigma^*$,
\[
I(u:w) = K(u) + K(w) - K(u,w) + o(|u|).
\]
\end{lemma}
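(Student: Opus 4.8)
The plan is to reduce the identity to the \emph{symmetry of information} theorem of Kolmogorov and Levin \cite{bLiVit08}, which asserts that there is a constant such that for all strings $u,w$,
\[
K(u,w) = K(u) + K(w \mid u^*) + O(1),
\]
where $u^*$ denotes a shortest program for $u$. Starting from the definition $I(u:w) = K(w) - K(w \mid u)$, it suffices to establish that $K(w \mid u) = K(u,w) - K(u) + o(|u|)$; substituting this into the definition and rearranging immediately yields $I(u:w) = K(u) + K(w) - K(u,w) + o(|u|)$, as required.

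First I would pass from the $u^*$-conditioned form of symmetry of information to the $u$-conditioned form. Since $u = U(u^*)$ is computable from $u^*$, we have $K(w \mid u^*) \leq K(w \mid u) + O(1)$. Conversely, given $u$ together with the number $K(u) = |u^*|$, one can recover a shortest program $u^*$ by dovetailing $U$ on all strings of length $K(u)$ until one that prints $u$ is found; encoding $K(u)$ costs $O(\log K(u))$ additional bits, so $K(w \mid u) \leq K(w \mid u^*) + O(\log K(u))$. Combining these two bounds gives $K(w \mid u) = K(w \mid u^*) + O(\log K(u))$.

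Next I would observe that this correction is genuinely $o(|u|)$: since $K(u) \leq |u|\log|\Sigma| + O(\log |u|)$, we have $O(\log K(u)) = O(\log |u|) = o(|u|)$, and crucially this bound depends only on $u$ and not on $w$. Substituting into the symmetry of information theorem then yields
\[
K(u,w) = K(u) + K(w \mid u) + o(|u|),
\]
which rearranges to the required expression for $K(w \mid u)$. The remaining step is purely arithmetic, as indicated above.

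The main obstacle is not the algebra but the symmetry of information itself, whose easy direction $K(u,w) \leq K(u) + K(w \mid u^*) + O(1)$ follows by concatenating a shortest program for $u$ with a shortest program for $w$ given $u^*$, while the nontrivial inequality $K(u,w) \geq K(u) + K(w \mid u^*) - O(1)$ requires the standard enumeration argument counting strings of bounded complexity; I would cite this as the Kolmogorov--Levin theorem rather than reprove it. The one point that demands care is confirming that the correction incurred in moving from $u^*$ to $u$ stays logarithmic in $K(u)$ alone, so that the overall error remains within $o(|u|)$ and does not pick up a spurious dependence on $|w|$.
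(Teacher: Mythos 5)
Your proof is correct: the reduction to the Levin--G\'{a}cs symmetry of information theorem $K(u,w) = K(u) + K(w \mid u^*) + O(1)$, together with the observation that replacing the condition $u^*$ by $u$ costs only $O(\log K(u)) = O(\log |u|) = o(|u|)$ \emph{uniformly in} $w$, is the standard argument. The paper itself offers no proof of this lemma---it cites \cite{pCasLut15}---and your derivation matches the expected route; the only quibbles are attributional (for prefix-free $K$, as used here, the symmetry theorem is due to G\'{a}cs and Levin, and its hard direction uses a Kraft-inequality/semimeasure argument rather than the plain-complexity counting argument you sketch, though you rightly defer to a citation for it).
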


\begin{corollary}\label{sym1}
For all $u,w \in \Sigma^*$,
\[
I(u:w) = I(w:u) + o(|u|) + o(|w|).
\]
\end{corollary}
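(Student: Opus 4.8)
The plan is to derive the symmetry of mutual information (up to lower-order error terms) directly from the characterization in Lemma~\ref{sym}, which expresses $I(u:w)$ in terms of the individual and joint Kolmogorov complexities. The key observation is that the right-hand side of Lemma~\ref{sym}, namely $K(u) + K(w) - K(u,w)$, is \emph{almost} symmetric in $u$ and $w$: the only asymmetry is in the joint term $K(u,w)$ versus $K(w,u)$, which differ only because of the order of arguments to the pairing function.

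First I would apply Lemma~\ref{sym} twice, once to $I(u:w)$ and once to $I(w:u)$, obtaining
\[
I(u:w) = K(u) + K(w) - K(u,w) + o(|u|)
\]
and
\[
I(w:u) = K(w) + K(u) - K(w,u) + o(|w|).
\]
Subtracting these two expressions, the $K(u)$ and $K(w)$ terms cancel, leaving
\[
I(u:w) - I(w:u) = K(w,u) - K(u,w) + o(|u|) + o(|w|).
\]
So the entire argument reduces to showing that $K(u,w)$ and $K(w,u)$ differ by at most a lower-order term, i.e. $|K(u,w) - K(w,u)| = o(|u|) + o(|w|)$.

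The main obstacle, then, is bounding $|K(u,w) - K(w,u)|$. I expect this to be the crux of the proof, though it is a standard fact: since $K(u,w) = K(\langle u,w\rangle)$ and $K(w,u) = K(\langle w,u\rangle)$, and the map $\langle u,w\rangle \mapsto \langle w,u\rangle$ is computed by a single fixed Turing machine (one that decodes the pair and re-encodes it in the opposite order), optimality of $U$ yields $K(\langle w,u\rangle) \leq K(\langle u,w\rangle) + O(1)$, and symmetrically in the other direction. Hence $|K(u,w) - K(w,u)| = O(1)$, which is certainly $o(|u|) + o(|w|)$. Substituting this bound into the subtracted equation gives $I(u:w) - I(w:u) = o(|u|) + o(|w|)$, which rearranges to the desired statement $I(u:w) = I(w:u) + o(|u|) + o(|w|)$.

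I would note that the only subtlety is bookkeeping with the asymptotic error terms: the $o(\cdot)$ notation here is understood as a bound that holds uniformly as the string lengths grow, and since the $O(1)$ term from the pairing-function swap is dominated by both $o(|u|)$ and $o(|w|)$, collecting all error contributions into $o(|u|) + o(|w|)$ is routine. No new machinery beyond Lemma~\ref{sym} and the optimality of the universal machine $U$ is required.
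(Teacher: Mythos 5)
Your proposal is correct and takes essentially the same route as the paper: the paper's proof also applies Lemma \ref{sym} in both directions and silently replaces $K(u,w)$ by $K(w,u)$, absorbing the discrepancy into the error terms. Your explicit justification of that swap via a machine that reorders the pair and the optimality of $U$, giving $|K(u,w) - K(w,u)| = O(1)$, is exactly the detail the paper leaves implicit.
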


\begin{proof}
By Lemma \ref{sym},
\begin{align*}
I(u:w) &= K(u) + K(w) - K(u,w) + o(|u|)\\
			 &= K(w) + K(u) - K(w,u) + o(|u|)\\
			 &= I(w:u) + o(|u|) + o(|w|). \qedhere
\end{align*}
\end{proof}

The following lemma was proven in \cite{CasLut15}.

\begin{lemma}\label{proc}
Let $f:\Sigma^* \times \Sigma^* \rightarrow \Sigma^*$ be a computable function. There exists a constant $c \in \mathbb{N}$ such that, for all strings $u,v,w \in \Sigma^*$,
\[
K(u\,|\,w) \leq K(u\,|\,f(w,v)) + K(v) + c.
\]
\end{lemma}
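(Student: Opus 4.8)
The plan is to prove this by the standard technique of constructing a single Turing machine that witnesses the claimed inequality and then invoking the optimality of $U$, in exact parallel with the proof of Lemma \ref{cond}. The key observation is that since $f$ is computable and total, knowing both $w$ and $v$ suffices to compute $f(w,v)$; hence any program that produces $u$ from $f(w,v)$ can be converted into one that produces $u$ directly from $w$, at the additional cost of a description of $v$.

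Concretely, I would define a Turing machine $M$ that, on input $(\rho, w)$, first interprets a self-delimiting prefix $\pi_1$ of $\rho$ as a program for $U$ and runs $U(\pi_1)$ to obtain a string $v$; because $U$ is prefix-free, $M$ can detect where $\pi_1$ ends and read the remaining bits as a second program $\pi_2$. The machine then computes $f(w,v)$ using the fixed algorithm for $f$ together with the conditional input $w$ and the recovered $v$, and finally runs $U(\pi_2, f(w,v))$, outputting whatever that computation returns. By design, whenever $U(\pi_1) = v$ and $U(\pi_2, f(w,v)) = u$, we have $M(\pi_1\pi_2, w) = u$, so $K_M(u \mid w) \leq |\pi_1| + |\pi_2|$.

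To finish, I would fix an optimality constant $c_M$ for $M$, and choose $\pi_1$ to be a minimum-length program for $v$ and $\pi_2$ a minimum-length program for $u$ given $f(w,v)$, so that $|\pi_1| = K(v)$ and $|\pi_2| = K(u \mid f(w,v))$. Optimality of $U$ then gives
\[
K(u \mid w) \leq K_M(u \mid w) + c_M \leq K(v) + K(u \mid f(w,v)) + c_M,
\]
and setting $c = c_M$ yields the lemma. Since $M$ (and hence $c_M$) does not depend on $u$, $v$, or $w$, the constant $c$ is uniform over all triples, as required.

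The only point requiring genuine care, rather than a true obstacle, is the correct handling of the self-delimiting concatenation $\pi_1\pi_2$: one must ensure that $M$ can unambiguously recover both programs from their concatenation, which is what allows the two lengths to add. This is precisely the prefix-free parsing already used implicitly in Lemma \ref{cond}, so the argument is structurally identical to the supporting lemmas proved earlier in this section, and the remaining steps are entirely routine.
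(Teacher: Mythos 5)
Your proof is correct: the machine you construct (parse a self-delimiting program $\pi_1$ for $v$, compute $f(w,v)$ from the conditional input $w$ and the recovered $v$, then run $\pi_2$ on conditional input $f(w,v)$) together with the optimality of $U$ gives exactly the claimed inequality with a constant independent of $u,v,w$. Note that the paper does not prove Lemma \ref{proc} itself but cites it from an earlier work; your argument is the standard one and is structurally identical to the paper's own proof of Lemma \ref{cond}, so it matches the intended approach.
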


We now investigate bounded Turing reductions and their effects on the shared algorithmic information between strings. As previously mentioned, a halting oracle machine computation can only make a finite number of queries to its oracle, and we are often interested in knowing the largest position of the oracle tape that a machine will query before it halts. The following definition is from \cite{uSpies11}.

\begin{definition} The \emph{use function} of a Turing functional $\Phi$ equipped with oracle $S \in \Sigma^{\infty}$ is
\[
\phi^S_{use}(n) = \threepartdef{m + 1}{\Phi^S(n)\downarrow \text{ and } m \text{ is the largest query made to the oracle $S$}}{0}{\Phi^S(n)\downarrow \text{ and the oracle $S$ is not queried during the computation}}{\text{undefined}}{\Phi^S(n)\uparrow},
\]
for every $n \in \mathbb{N}$.
\end{definition}
We denote Turing functionals using uppercase Greek letters (e.g., $\Phi$, $\Gamma$) and their corresponding use functions by lowercase Greek letters (e.g., $\phi_{use}$, $\gamma_{use}$).

\begin{definition}
A sequence $T \in \Sigma^{\infty}$ is \emph{bounded Turing reducible} (\emph{bT-reducible}) to a sequence $S \in \Sigma^{\infty}$ if $T$ is Turing reducible to $S$ by a Turing functional $\Phi$ such that $\phi^S_{use}$ is bounded by a computable function.
\end{definition}

For convenience, we say that $T \in \Sigma^{\infty}$ is \emph{m-bT-reducible} to $S \in \Sigma^{\infty}$ if $T$ is bT-reducible to $S$ via $\Phi$ and $m: \mathbb{N} \rightarrow \mathbb{N}$ is a computable function bounding $\phi^S_{use}$.

\begin{lemma}\label{tech}
Let $m: \mathbb{N} \rightarrow \mathbb{N}$ be an increasing, computable function. For all $X,Y,Z \in \Sigma^{\infty}$, if $Z$ is $m$-bT-Turing reducible to $X$, then
\[
I(Z \upharpoonright n:Y\upharpoonright n) \leq I(X\upharpoonright m(n):Y\upharpoonright m(n)) + o(m(n)).
\]
\end{lemma}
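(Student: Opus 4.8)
The plan is to factor the inequality through the intermediate quantity $I(X \upharpoonright m(n):Y\upharpoonright n)$, first performing a ``data processing'' step that replaces $Z \upharpoonright n$ by $X \upharpoonright m(n)$, and then a ``length-matching'' step that extends the $Y$-argument from length $n$ up to length $m(n)$.

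First I would extract a computable function from the reduction. Since $Z$ is $m$-bT-reducible to $X$ via some $\Phi$, we have $\Phi^X(n) = Z \upharpoonright n$ and $\phi^X_{use}(n) \leq m(n)$; by the definition of the use function, every query made while computing $\Phi^X(n)$ lands at a position strictly below $m(n)$. Hence the computation $\Phi^{X \upharpoonright m(n)}(n)$ halts and outputs $Z \upharpoonright n$, so the partial computable map $f(w,v) = \Phi^{w}(v)$ (reading $v$ as the natural-number input) satisfies $f(X \upharpoonright m(n),n) = Z \upharpoonright n$. Because $m$ is increasing, I will also use freely that $m(n) \geq n$ and $m(n) \to \infty$. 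Next I would apply Lemma \ref{proc} with $u = Y \upharpoonright n$, $w = X \upharpoonright m(n)$, $v$ an encoding of $n$, and the $f$ just constructed, obtaining $K(Y \upharpoonright n \,|\, X \upharpoonright m(n)) \leq K(Y \upharpoonright n \,|\, Z \upharpoonright n) + K(n) + O(1)$. Recalling that $I(a:b) = K(b) - K(b\,|\,a)$ and that $K(n) = O(\log n) = o(m(n))$, this rearranges to the data-processing step $I(Z \upharpoonright n : Y \upharpoonright n) \leq I(X \upharpoonright m(n) : Y \upharpoonright n) + o(m(n))$.

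For the length-matching step I would pass to the symmetric form via Corollary \ref{sym1}, apply Corollary \ref{u:uv} to extend $Y \upharpoonright n$ to $Y \upharpoonright m(n)$ (taking $u = Y \upharpoonright n$, $uv = Y \upharpoonright m(n)$, and $w = X \upharpoonright m(n)$), and then return to the original argument order with a second use of Corollary \ref{sym1}. Since $n \leq m(n)$, every error term produced, whether of the form $o(n)$ or $o(m(n))$, is absorbed into $o(m(n))$, so this chain yields $I(X \upharpoonright m(n) : Y \upharpoonright n) \leq I(X \upharpoonright m(n) : Y \upharpoonright m(n)) + o(m(n))$. Combining the two steps and collecting error terms then gives the claim.

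The main obstacle I anticipate is bookkeeping the asymptotic error terms rather than any single deep argument: I must verify that the additive $K(n)$ loss from Lemma \ref{proc} together with the $o(|u|)$ and $o(|w|)$ losses from Corollaries \ref{u:uv} and \ref{sym1} are each genuinely $o(m(n))$, which is exactly where the hypothesis that $m$ is increasing (forcing $m(n) \geq n$ and $m(n) \to \infty$) is needed. A secondary point to handle carefully is the reduction-to-$f$ translation, namely checking that the use bound really does make $X \upharpoonright m(n)$ a sufficient finite oracle, so that $f$ is well defined on precisely the inputs the argument requires.
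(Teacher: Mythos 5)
Your proposal is correct and takes essentially the same route as the paper: both proofs factor the inequality into a data-processing step, applying Lemma \ref{proc} to the computable map $f(u,n) = \Phi^{u}(n)$ extracted from the reduction, and a length-matching step that extends $Y \upharpoonright n$ to $Y \upharpoonright m(n)$ via Corollaries \ref{sym1} and \ref{u:uv}, with the only difference being the order (you process first and length-match second, the paper does the reverse), which is immaterial. Your explicit verification that the use bound makes $X \upharpoonright m(n)$ a sufficient finite oracle, so that $f(X \upharpoonright m(n), n) = Z \upharpoonright n$, is a detail the paper leaves implicit.
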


\begin{proof}
Assume that $Z$ is $m$-bT-Turing reducible to $X$ by some Turing functional $\Phi$ whose use function $\phi^X_{use}$ is bounded by $m$. By Corollaries \ref{u:uv} and \ref{sym1},
\begin{align}\label{part1}
I(Z \upharpoonright n:Y \upharpoonright n) &= I(Y \upharpoonright n: Z \upharpoonright n) + o(n) \nonumber \\
																					 &\leq I(Y \upharpoonright m(n): Z \upharpoonright n) + o(n)\\
																					 &= I(Z \upharpoonright n: Y \upharpoonright m(n)) + o(m(n)). \nonumber
\end{align}
Define the partial function $f: \{0,1\}^* \times \mathbb{N} \rightarrow \{0,1\}^*$ by
\[
f(u,n) = \Phi^{u}(n),
\]
for all $u \in \Sigma^*$ and $n \in \mathbb{N}$. The function $f$ is clearly computable. Therefore, by (\ref{part1}) and Lemma \ref{proc},
\begin{align*}
I(Z \upharpoonright n: Y \upharpoonright n) &\leq I(f(X \upharpoonright m(n), n):Y \upharpoonright m(n)) + o(m(n))\\
																						&= K(Y \upharpoonright m(n)) - K(Y \upharpoonright m(n)\,|\,f(X \upharpoonright m(n), n)) + o(m(n))\\
																						&\leq K(Y \upharpoonright m(n)) - K(Y \upharpoonright m(n)\,|\,X \upharpoonright m(n)) + o(m(n))\\
																						&= I(X \upharpoonright m(n):Y \upharpoonright m(n)) + o(m(n)). \qedhere
\end{align*}
\end{proof}

The first notion of mutual dimension was defined in \cite{CasLut15} to analyze the density of algorithmic mutual information between points in Euclidean space. It was then extended to sequences in \cite{pCasLut15} in order to study coupled randomness.

\begin{definition}
The \emph{lower} and \emph{upper mutual dimensions} between $S \in \Sigma^{\infty}$ and $T \in \Sigma^{\infty}$ are
\[
mdim(S:T) = \displaystyle\liminf\limits_{n \rightarrow \infty}\frac{I(S \upharpoonright n:T\upharpoonright n)}{n\log\,|\Sigma|}
\]
and
\[
Mdim(S:T) = \displaystyle\limsup\limits_{n \rightarrow \infty}\frac{I(S\upharpoonright n:T\upharpoonright n)}{n\log\,|\Sigma|},
\]
\end{definition}
respectively. We now present an important technical lemma.

\begin{lemma}[Bounded Use Processing Lemma]\label{proc lemma}
Let $m: \mathbb{N} \rightarrow \mathbb{N}$ be an increasing, computable function. For all $X,Y,Z \in \Sigma^{\infty}$, if $Z$ is $m$-bT-Turing reducible to $X$, then
\[
mdim(Z:Y) \leq mdim(X:Y)\bigg (\displaystyle\limsup\limits_{n \rightarrow \infty}\frac{m(n)}{n} \bigg)
\]
and
\[
Mdim(Z:Y) \leq Mdim(X:Y)\bigg (\displaystyle\limsup\limits_{n \rightarrow \infty}\frac{m(n)}{n} \bigg),
\]
except when $\bigg (\displaystyle\limsup\limits_{n \rightarrow \infty}\frac{m(n)}{n} \bigg) = \infty$ while either $mdim(X:Y) = 0$ or $Mdim(X:Y) = 0$.
\end{lemma}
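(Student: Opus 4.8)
The plan is to reduce everything to Lemma \ref{tech} and then pass from mutual information to its density by dividing through by $n\log|\Sigma|$. Assuming $Z$ is $m$-bT-reducible to $X$, Lemma \ref{tech} gives $I(Z\upharpoonright n : Y\upharpoonright n) \le I(X\upharpoonright m(n):Y\upharpoonright m(n)) + o(m(n))$ for every $n$. Dividing by $n\log|\Sigma|$ and inserting the factor $m(n)/m(n)$ rewrites this as
\[
\frac{I(Z\upharpoonright n:Y\upharpoonright n)}{n\log|\Sigma|} \le \frac{I(X\upharpoonright m(n):Y\upharpoonright m(n))}{m(n)\log|\Sigma|}\cdot\frac{m(n)}{n} + \frac{o(m(n))}{n\log|\Sigma|},
\]
and I would then take $\liminf_{n}$ of both sides for the first claim and $\limsup_{n}$ for the second. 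Writing $a_n = \frac{I(X\upharpoonright m(n):Y\upharpoonright m(n))}{m(n)\log|\Sigma|}$ and $b_n = m(n)/n$, the whole argument comes down to controlling the limit behaviour of the product $a_nb_n$.

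The densities are asymptotically nonnegative (since $I(u:w)\ge -O(1)$), so I would apply the elementary product inequalities $\limsup_n(a_nb_n)\le(\limsup_n a_n)(\limsup_n b_n)$ and $\liminf_n(a_nb_n)\le(\liminf_n a_n)(\limsup_n b_n)$, valid outside the indeterminate form $0\cdot\infty$. Since $m$ is increasing and unbounded, the values $a_n$ form a subsequence of the density $\frac{I(X\upharpoonright k:Y\upharpoonright k)}{k\log|\Sigma|}$, whose $\limsup$ is $Mdim(X:Y)$ and whose $\liminf$ is $mdim(X:Y)$. The factor $\limsup_n b_n$ is exactly the stated $\limsup_n m(n)/n$. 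For the upper-dimension claim this closes cleanly, because the $\limsup$ of a subsequence never exceeds that of the full sequence, giving $\limsup_n a_n \le Mdim(X:Y)$ and hence $Mdim(Z:Y)\le Mdim(X:Y)\bigl(\limsup_n m(n)/n\bigr)$.

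The error term must be tracked to justify the exceptional case. Writing $\frac{o(m(n))}{n\log|\Sigma|} = \frac{o(m(n))}{m(n)}\cdot\frac{m(n)}{n}\cdot\frac{1}{\log|\Sigma|}$, the first factor tends to $0$, so the term vanishes whenever $\limsup_n m(n)/n<\infty$; and when this limsup is infinite while the relevant mutual dimension of $X$ is positive, the right-hand side of each claimed inequality is already $+\infty$, so there is nothing to prove. The only remaining configuration is $\limsup_n m(n)/n=\infty$ together with $mdim(X:Y)=0$ or $Mdim(X:Y)=0$, where the bounding product degenerates to $0\cdot\infty$; this is precisely the situation the statement excludes, so the analysis is consistent with the hypotheses.

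I expect the main obstacle to be the $\liminf$ half. The limsup direction is immediate, but for the liminf the subsequence $(a_n)$ only guarantees $\liminf_n a_n \ge mdim(X:Y)$, so the product inequality alone yields a bound in terms of $\liminf_n a_n$ rather than $mdim(X:Y)$ itself. The delicate point is therefore to show that, under the increasing computable bound $m$, the density evaluated along $m(n)$ actually realizes $mdim(X:Y)$ in the limit inferior, which is transparent when the image of $m$ is cofinite (as for $m(n)=n+c$) and must be argued with care in general, all while keeping the $0\cdot\infty$ carve-out in force so that every product encountered remains well defined.
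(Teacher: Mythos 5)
Your proposal follows essentially the same route as the paper's proof: apply Lemma \ref{tech}, insert the factor $\frac{m(n)}{m(n)}$, and split the product with the inequalities $\liminf_n(a_nb_n)\leq(\liminf_n a_n)(\limsup_n b_n)$ and $\limsup_n(a_nb_n)\leq(\limsup_n a_n)(\limsup_n b_n)$, with the $o(m(n))$ term and the excluded $0\cdot\infty$ case handled exactly as the paper intends (your treatment of these is in fact more explicit than the paper's). The one step you leave open --- whether the density evaluated along the subsequence $m(n)$ realizes $mdim(X:Y)$ in the limit inferior --- is a genuine issue, and it is precisely the point at which the paper's own proof is silent: its last line asserts the equality $\liminf_n \frac{I(X \upharpoonright m(n):Y\upharpoonright m(n))+o(m(n))}{m(n)\log|\Sigma|} = mdim(X:Y)$ without justification, whereas in general a $\liminf$ taken along a subsequence satisfies only ``$\geq$'' relative to the full sequence, which is the wrong direction for the chain of inequalities. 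So you have correctly located the only fragile step; your $Mdim$ half is complete as written, since a subsequence $\limsup$ never exceeds the full one.

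The gap you flag can be closed, so the lemma survives, but it takes an argument neither you nor the paper supplies. When $L := \limsup_n m(n)/n = \infty$ the claim is trivial or excluded, as you observe, so assume $L<\infty$. Choose $k_i\to\infty$ with $\frac{I(X\upharpoonright k_i:Y\upharpoonright k_i)}{k_i\log|\Sigma|}\to mdim(X:Y)$, and let $n_i$ be the largest $n$ with $m(n)\leq k_i$. Mutual information is almost monotone in the prefix length: Corollaries \ref{u:uv} and \ref{sym1} give $I(X\upharpoonright j:Y\upharpoonright j)\leq I(X\upharpoonright k:Y\upharpoonright k)+o(k)$ for $j\leq k$, so Lemma \ref{tech} at $n_i$ yields
\[
I(Z\upharpoonright n_i:Y\upharpoonright n_i)\leq I(X\upharpoonright m(n_i):Y\upharpoonright m(n_i))+o(m(n_i))\leq I(X\upharpoonright k_i:Y\upharpoonright k_i)+o(k_i).
\]
Since $m(n_i+1)>k_i$ and $m(n)\leq(L+\varepsilon)n$ for all large $n$, we get $k_i/n_i\leq(L+\varepsilon)(1+1/n_i)$, and dividing by $n_i\log|\Sigma|$ and letting $i\to\infty$ gives $mdim(Z:Y)\leq mdim(X:Y)\cdot L$, as required. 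Your remark that the step is transparent when the image of $m$ is cofinite (e.g.\ $m(n)=n+c$, the case actually used for Theorem \ref{dpi}) is correct; the above shows how to handle general increasing computable $m$, using the slow variation of $I$ rather than any structure of the image of $m$.
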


\begin{proof}
By Lemma \ref{tech},
\begin{align*}
mdim(Z:Y) &= \displaystyle\liminf\limits_{n \rightarrow \infty}\frac{I(Z \upharpoonright n: Y \upharpoonright n)}{n\log\,|\Sigma|}\\
					&\leq \displaystyle\liminf\limits_{n \rightarrow \infty}\frac{I(X \upharpoonright m(n): Y \upharpoonright m(n)) + o(m(n))}{n\log\,|\Sigma|}\\
					&= \displaystyle\liminf\limits_{n \rightarrow \infty}\bigg (\frac{I(X \upharpoonright m(n): Y \upharpoonright m(n)) + o(m(n))}{m(n)\log|\Sigma|} \cdot \frac{m(n)}{n} \bigg )\\
					&\leq \bigg ( \displaystyle\liminf\limits_{n \rightarrow \infty}\frac{I(X \upharpoonright m(n): Y \upharpoonright m(n)) + o(m(n))}{m(n)\log|\Sigma|}\bigg ) \bigg( \displaystyle\limsup\limits_{n \rightarrow \infty} \frac{m(n)}{n} \bigg )\\
					&= mdim(X:Y)\bigg (\displaystyle\limsup\limits_{n \rightarrow \infty}\frac{m(n)}{n} \bigg ).
\end{align*}
A similar proof can be given for $Mdim$. \qedhere
\end{proof}

\begin{definition}
Let $m: \mathbb{N} \rightarrow \mathbb{N}$ be defined by $m(n) = n + c$, where $c \in \mathbb{N}$ is a constant. A sequence $T \in \Sigma^{\infty}$ is \emph{computable Lipschitz reducible} ($cl$-\emph{reducible}) to a sequence $S \in \Sigma^{\infty}$ if $T$ is $m$-bT-reducible to $S$.
\end{definition}


The following theorem follows directly from the Bounded Use Processing Lemma.

\begin{theorem}\label{dpi}
For all sequences $X,Y,Z \in \Sigma^{\infty}$, if $Z$ is cl-reducible to $X$, then
\[
mdim(Z:Y) \leq mdim(X:Y)
\]
and
\[
Mdim(Z:Y) \leq Mdim(X:Y).
\]
\end{theorem}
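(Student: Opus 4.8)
The plan is to apply the Bounded Use Processing Lemma (Lemma \ref{proc lemma}) directly, specialized to the bounding function that defines cl-reducibility. First I would unpack the hypothesis: by the definition of cl-reducibility, the statement that $Z$ is cl-reducible to $X$ means precisely that $Z$ is $m$-bT-reducible to $X$ for the function $m(n) = n + c$, where $c \in \mathbb{N}$ is a fixed constant. This $m$ is plainly increasing and computable, so it satisfies the hypotheses needed to invoke Lemma \ref{proc lemma} for the triple $X, Y, Z$.

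The key computation is then the value of the density factor $\limsup_{n \to \infty} m(n)/n$. Substituting $m(n) = n + c$ gives $\limsup_{n \to \infty} (n+c)/n = 1$, since $c$ is constant. Feeding this value into the conclusion of Lemma \ref{proc lemma} yields
\[
mdim(Z:Y) \leq mdim(X:Y) \cdot 1 = mdim(X:Y)
\]
and likewise $Mdim(Z:Y) \leq Mdim(X:Y)$, which is exactly the desired statement.

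The only point requiring attention is the exceptional case carved out in the statement of Lemma \ref{proc lemma}, namely when the density factor is infinite while either $mdim(X:Y) = 0$ or $Mdim(X:Y) = 0$. Here the factor equals $1$ rather than $\infty$, so that exception is vacuous and the two inequalities hold unconditionally. There is no genuine obstacle in the argument: the full content of the theorem is already packaged in the lemma, and the proof amounts to observing that the Lipschitz bound $n + c$ forces the limiting ratio $m(n)/n$ to be exactly $1$.
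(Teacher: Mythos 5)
Your proposal is correct and matches the paper's own argument: the paper derives Theorem \ref{dpi} directly from the Bounded Use Processing Lemma with $m(n) = n + c$, exactly as you do. Your explicit check that $\limsup_{n \to \infty} (n+c)/n = 1$ and that the exceptional case of the lemma is therefore vacuous is just a careful spelling-out of what the paper leaves implicit.
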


Let $\alpha \geq 1$ and $h_{\alpha}: \mathbb{N} \rightarrow \mathbb{N}$ be defined by $h_{\alpha}(n) = \lceil \alpha(n + c) \rceil$, where $c \in \mathbb{N}$ is a constant. The following is a corollary of the Bounded Use Processing Lemma.

\begin{corollary}\label{gbound}
Let $\alpha \geq 1$. For all sequences $X,Y,Z \in \Sigma^{\infty}$, if $Z$ is $h_{\alpha}$-bT-reducible to a sequence $X$, then
\[
mdim(Z:Y) \leq \alpha \cdot mdim(X:Y)
\]
and
\[
Mdim(Z:Y) \leq \alpha \cdot Mdim(X:Y).
\]
\end{corollary}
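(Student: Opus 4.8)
The plan is to apply the Bounded Use Processing Lemma (Lemma \ref{proc lemma}) directly with the computable bound $m = h_{\alpha}$. First I would verify that $h_{\alpha}$ satisfies the hypotheses of that lemma: since $h_{\alpha}(n) = \lceil \alpha(n+c) \rceil$ with $\alpha \geq 1$ and $c \in \mathbb{N}$, it is clearly computable, and it is increasing because $\alpha(n+c)$ increases by at least $\alpha \geq 1$ at each step, so $\lceil \alpha(n+1+c) \rceil \geq \lceil \alpha(n+c)+1 \rceil = \lceil \alpha(n+c) \rceil + 1$. Thus whenever $Z$ is $h_{\alpha}$-bT-reducible to $X$, the lemma applies and yields
\[
mdim(Z:Y) \leq mdim(X:Y)\bigg(\limsup_{n \to \infty}\frac{h_{\alpha}(n)}{n}\bigg),
\]
together with the analogous inequality for $Mdim$.

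The only real computation is to evaluate $\limsup_{n \to \infty} h_{\alpha}(n)/n$, and I expect this to be the crux. Using the elementary bounds $\alpha(n+c) \leq \lceil \alpha(n+c) \rceil < \alpha(n+c) + 1$, I would divide through by $n$ to obtain
\[
\alpha + \frac{\alpha c}{n} \leq \frac{h_{\alpha}(n)}{n} < \alpha + \frac{\alpha c + 1}{n}.
\]
Both the lower and upper bounds converge to $\alpha$ as $n \to \infty$, so by the squeeze theorem the ordinary limit $\lim_{n \to \infty} h_{\alpha}(n)/n$ exists and equals $\alpha$; in particular the limsup equals $\alpha$ as well.

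Finally I would substitute this value back into the conclusion of the lemma. Since $\alpha \geq 1$ is finite, the factor $\limsup_{n \to \infty} h_{\alpha}(n)/n = \alpha$ is finite, so the exceptional case of Lemma \ref{proc lemma} (in which the limsup is infinite while a mutual dimension vanishes) never arises and may be safely ignored. This delivers $mdim(Z:Y) \leq \alpha \cdot mdim(X:Y)$ and $Mdim(Z:Y) \leq \alpha \cdot Mdim(X:Y)$ directly. The argument is short; the only points requiring care are confirming that the limit genuinely exists (rather than merely bounding the limsup from above) and checking that the excluded case of the lemma is vacuous here precisely because $\alpha$ is finite.
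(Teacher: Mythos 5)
Your proof is correct and follows exactly the route the paper intends: the paper derives Corollary \ref{gbound} directly from the Bounded Use Processing Lemma with $m = h_{\alpha}$, which is precisely your argument, fleshed out with the verification that $h_{\alpha}$ is increasing, that $\limsup_{n\to\infty} h_{\alpha}(n)/n = \alpha$, and that the exceptional infinite-limsup case is vacuous. Your added care in checking these hypotheses (which the paper leaves implicit) is exactly what a complete write-up of the paper's one-line derivation would contain.
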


Typically, data processing inequalities are statements about \emph{all} of the defined outputs of a particular transformation. The results above, while strong, are not framed in this manner. To remedy this, we now discuss data processing inequalities in terms of individual bounded Turing functionals.

\begin{definition}
Let $m:\mathbb{N} \rightarrow \mathbb{N}$ be a computable function. A $m$-\emph{bounded Turing functional} (\emph{m-bT-functional}) is a Turing functional such that, for every sequence $S \in \Sigma^{\infty}$ and every $n \in \mathbb{N}$ where $\Phi^S(n)$ is defined, $\phi^S_{use}(n) \leq m(n)$.
\end{definition}


%

\begin{definition}
Let $m: \mathbb{N} \rightarrow \mathbb{N}$ be defined by $m(n) = n + c$. A $cl$-\emph{functional} is a $m$-bounded Turing functional.
\end{definition}
%

We use Theorem \ref{dpi} and Corollary \ref{gbound} to derive the following data processing inequalities for sequences whose transformations are bounded Turing functionals.

\begin{corollary}
If $\Phi$ is a cl-functional, then, for all $S,T \in \Sigma^{\infty}$ where $\Phi^S$ is defined,
\[
mdim(\Phi^S:T) \leq mdim(S:T)
\]
and
\[
Mdim(\Phi^S:T) \leq Mdim(S:T).
\]
\end{corollary}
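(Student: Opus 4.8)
The plan is to reduce this corollary to the already-established Theorem~\ref{dpi} by converting a statement about an individual $cl$-functional into a statement about $cl$-reducibility. The key observation is that the hypotheses align almost definitionally: if $\Phi$ is a $cl$-functional, then by definition $\phi^S_{use}(n) \leq m(n) = n + c$ for every sequence $S$ and every $n$ where $\Phi^S(n)$ is defined. First I would fix arbitrary $S, T \in \Sigma^{\infty}$ such that $\Phi^S$ is defined (meaning $\Phi^S(n)\downarrow$ for all $n$, so that $\Phi^S$ names a genuine sequence). Setting $Z = \Phi^S$ and $X = S$, I want to observe that $Z$ is $cl$-reducible to $X$ in the sense of the definition, i.e. $Z$ is $m$-bT-reducible to $X$ with $m(n) = n+c$.

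The crux is verifying that $Z = \Phi^S$ is Turing reducible to $X = S$ via a functional whose use is bounded by the computable function $m(n) = n+c$. Since $\Phi^S$ is defined, we have $\Phi^S(n) = (\Phi^S)\upharpoonright n = Z \upharpoonright n$ for all $n$, so $\Phi$ witnesses that $Z$ is Turing reducible to $X$. The use bound $\phi^X_{use}(n) \leq n + c$ holds precisely because $\Phi$ is a $cl$-functional and $\Phi^X(n) = \Phi^S(n)$ is defined for every $n$. Hence $Z$ is $cl$-reducible to $X$, and Theorem~\ref{dpi} immediately yields
\[
mdim(\Phi^S:T) = mdim(Z:Y) \leq mdim(X:Y) = mdim(S:T)
\]
with $Y = T$, together with the analogous $Mdim$ inequality.

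The main subtlety, and the step I would be most careful about, is the mild mismatch between the per-input definition of an $m$-bT-\emph{functional} (a bound on $\phi^S_{use}(n)$ holding uniformly over all oracles $S$ and all $n$ where $\Phi^S(n)$ is defined) and the definition of $m$-bT-\emph{reducibility} (which requires a fixed oracle $X$, a total reduction $\Phi^X = Z$, and the use function $\phi^X_{use}$ bounded by the computable $m$). These are reconciled only under the standing assumption that $\Phi^S$ is defined as a total function producing a sequence; I would make explicit that ``$\Phi^S$ is defined'' is exactly what is needed to promote the $cl$-functional into a $cl$-reduction of the specific sequence $\Phi^S$ to $S$. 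Once that bridge is stated cleanly, the proof is a one-line appeal to Theorem~\ref{dpi}, so I expect no genuine computational obstacle — the entire content lies in correctly matching the two definitions.
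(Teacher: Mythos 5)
Your proposal is correct and matches the paper's own (implicit) argument exactly: the paper derives this corollary directly from Theorem~\ref{dpi} by noting that when $\Phi$ is a $cl$-functional and $\Phi^S$ is defined, $\Phi$ itself witnesses that $\Phi^S$ is $cl$-reducible to $S$. Your careful bridging of the functional-level use bound to the reducibility definition is precisely the content the paper leaves unstated.
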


We also have a similar data processing inequality for $h_{\alpha}$-bounded Turing functionals.
\begin{corollary}
For all $\alpha \geq 1$, if $\Phi$ is a $h_{\alpha}$-bounded Turing functional, then, for all $S,T \in \Sigma^{\infty}$ where $\Phi^S$ is defined,
\[
mdim(\Phi^S:T) \leq \alpha \cdot mdim(S:T)
\]
and
\[
Mdim(\Phi^S:T) \leq \alpha \cdot Mdim(S:T).
\]
\end{corollary}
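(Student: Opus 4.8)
The plan is to reduce this corollary directly to Corollary \ref{gbound}, exactly as the preceding corollary for $cl$-functionals follows from Theorem \ref{dpi}. The essential point is that the functional-level hypothesis ``$\Phi$ is a $h_{\alpha}$-bounded Turing functional and $\Phi^S$ is defined'' is precisely what is needed to instantiate the sequence-level reducibility hypothesis of Corollary \ref{gbound} at the particular sequence $Z = \Phi^S$.

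First I would fix $\alpha \geq 1$, let $\Phi$ be a $h_{\alpha}$-bounded Turing functional, and fix $S, T \in \Sigma^{\infty}$ for which $\Phi^S$ is defined. Writing $Z = \Phi^S$, i.e., the sequence satisfying $\Phi^S(n) = Z \upharpoonright n$ for every $n \in \mathbb{N}$, this immediately exhibits $Z$ as Turing reducible to $S$ via $\Phi$. The next step is to upgrade this to $h_{\alpha}$-bT-reducibility: since $\Phi$ is $h_{\alpha}$-bounded, its use satisfies $\phi^S_{use}(n) \leq h_{\alpha}(n)$ for every $n$ at which $\Phi^S(n)$ is defined, and because $\Phi^S$ is defined on all inputs, this bound holds for all $n \in \mathbb{N}$. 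As $h_{\alpha}(n) = \lceil \alpha(n+c) \rceil$ is a strictly increasing computable function, this is exactly the assertion that $Z$ is $h_{\alpha}$-bT-reducible to $S$.

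Finally I would apply Corollary \ref{gbound} with $X = S$, $Y = T$, and $Z = \Phi^S$, which yields $mdim(\Phi^S:T) \leq \alpha \cdot mdim(S:T)$ and $Mdim(\Phi^S:T) \leq \alpha \cdot Mdim(S:T)$, as required. The only step meriting attention is the translation between the functional hypothesis and the reducibility hypothesis of Corollary \ref{gbound}, and here the assumption that $\Phi^S$ is defined everywhere is precisely what guarantees the use bound holds on all inputs; consequently there is no real obstacle beyond unwinding the definitions, and the proof is short.
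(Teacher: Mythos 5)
Your proposal is correct and takes essentially the same route as the paper: the paper derives this corollary directly from Corollary \ref{gbound} by instantiating $Z = \Phi^S$, $X = S$, $Y = T$, which is exactly your argument. Your explicit unwinding of the functional-level hypothesis (the use bound holds on all inputs because $\Phi^S$ is everywhere defined, so $\Phi^S$ is $h_{\alpha}$-bT-reducible to $S$) is precisely the intended content of the paper's one-line derivation.
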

\section{Turing Functionals with Bounded Yield and Reverse Data Processing Inequalities}
In this section, we define the \emph{yield} of a Turing functional and develop several reverse data processing inequalities (i.e., data processing inequalities where the transformations may significantly \emph{increase} the mutual dimension between two sequences) using yield bounded Turing functionals.

We now introduce the \emph{yield function} of a Turing functional.
\begin{definition}
The \emph{yield function} of a Turing functional $\Phi$ equipped with oracle $S \in \Sigma^{\infty}$ is defined by
\[
\phi^S_{yield}(n) = \min\{m \in \mathbb{N} \,|\, \Phi^{S\upharpoonright n}(m)\uparrow\},
\]
for all $n \in \mathbb{N}$.
\end{definition}

Intuitively, ``use'' is how much of the oracle the Turing functional must access in order for it to halt on a given input, and ``yield'' is how many inputs the Turing functional can halt on given a prefix of the oracle.

\begin{definition}
A sequence $T \in \Sigma^{\infty}$ is \emph{yield bounded reducible} (\emph{yb-reducible}) to a sequence $S \in \Sigma^{\infty}$ if $T$ is Turing reducible to $S$ by a Turing functional $\Phi$ such that $\phi^S_{yield}$ is bounded by a computable function.
\end{definition}

For convenience, we say that $T$ is $m$\emph{-yb-reducible} to $S$ if $T$ is yb-reducible to $S$ and $m: \mathbb{N} \rightarrow \mathbb{N}$ is a computable function bounding $\phi^S_{yield}$.

In order to develop reverse data processing inequalities for sequences, we need to apply the following restriction to our Turing functionals.

\begin{definition}
A Turing functional $\Phi^S$ is \emph{uniquely yielding} for an oracle $S \in \Sigma^{\infty}$ if, for all $T \in \Sigma^{\infty}$ and $n \in \mathbb{N}$,
\[
\Phi^S\upharpoonright \phi^S_{yield}(n) \sqsubseteq \Phi^T \Rightarrow S \upharpoonright n \sqsubseteq T.
\]
\end{definition}

\begin{definition}
A sequence $T \in \Sigma^{\infty}$ is \emph{uniquely yield bounded reducible} (\emph{uyb-reducible}) to $S \in \Sigma^{\infty}$ if $T$ is yb-reducible to $S$ by a Turing functional that is uniquely yielding.
\end{definition}

We say that $T$ is $m$\emph{-uyb-reducible} to $S$ if $T$ is uyb-reducible to $S$ by a Turing functional whose yield function is bounded by a computable function $m: \mathbb{N} \rightarrow \mathbb{N}$.

\begin{lemma}\label{rev tech}
If $T \in \Sigma^{\infty}$ is $m$-uyb-reducible to $S \in \Sigma^{\infty}$, then $S$ is $m$-bT-reducible to $T$.
\end{lemma}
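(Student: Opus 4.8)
The plan is to construct a Turing functional $\Gamma$ witnessing that $S$ is $m$-bT-reducible to $T$, using the uniquely yielding functional $\Phi$ that witnesses the hypothesis (so $\Phi^S = T$, $\Phi$ is uniquely yielding for $S$, and $\phi^S_{yield}(n) \le m(n)$ for all $n$). The guiding intuition is the duality recorded just after the definition of the yield function: the number of inputs that $\Phi$ resolves from $n$ bits of $S$ (its yield) should become the number of bits of $T$ that $\Gamma$ needs in order to recover $S \upharpoonright n$ (its use). Writing $y_n = \phi^S_{yield}(n) \le m(n)$, the structural fact I will exploit is that any finite string $\tau$ on which $\Phi$ reproduces a long enough prefix of $T$ must already agree with $S$ on its first $n$ symbols; the uniquely yielding hypothesis is exactly what turns ``$\Phi^\tau$ outputs $T \upharpoonright y_n$'' into ``$\tau \upharpoonright n = S \upharpoonright n$.''

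Concretely, I would define $\Gamma^T(n)$ as follows. First compute $b = m(n)$ and read $T \upharpoonright b$ from the oracle; this is the only access $\Gamma$ makes to $T$, since every simulation below uses a finite \emph{string} oracle rather than $T$ itself. Hence $\gamma^T_{use}(n) \le m(n)$, which is precisely what $m$-bT-reducibility demands. Next, dovetail over all finite strings $\tau$ with $|\tau| \ge n$, simulating $\Phi^\tau(0), \Phi^\tau(1), \ldots, \Phi^\tau(b)$, and halt as soon as some $\tau$ is found for which every $\Phi^\tau(i)$ converges with $\Phi^\tau(i) = T \upharpoonright i$ for all $i \le b$; then output $\tau \upharpoonright n$.

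Two things must be verified. For halting, I would exhibit an explicit witness: set $\tau = S \upharpoonright \max\{n,\, \max_{i \le b}\phi^S_{use}(i)\}$, a finite prefix of $S$ of length at least $n$. Because a halting computation with a string oracle never queries past the end of that oracle, each computation $\Phi^S(i)$ with $i \le b$ queries only positions below $|\tau|$, so $\Phi^\tau(i) = \Phi^S(i) = T \upharpoonright i$; thus the search is guaranteed to succeed. For correctness, suppose the search accepts some $\tau$, and let $U \in \Sigma^\infty$ be any sequence extending $\tau$. Since each $\Phi^\tau(i)$ halted using only positions below $|\tau|$, the same computation gives $\Phi^U(i) = \Phi^\tau(i) = T \upharpoonright i$ for all $i \le b$; in particular, as $y_n \le m(n) = b$, we obtain $\Phi^U(y_n) = T \upharpoonright y_n$, that is, $\Phi^S \upharpoonright y_n = T \upharpoonright y_n \sqsubseteq \Phi^U$. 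The uniquely yielding property then forces $S \upharpoonright n \sqsubseteq U$, and since $|\tau| \ge n$ and $\tau \sqsubseteq U$ we conclude $\tau \upharpoonright n = U \upharpoonright n = S \upharpoonright n$. Hence every accepted $\tau$ yields the correct value, giving $\Gamma^T = S$.

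I expect the main obstacle to be the correctness step, specifically the handling of the fact that $y_n = \phi^S_{yield}(n)$ is not computable from $T$ alone, since it depends on $S \upharpoonright n$. This is why the search tests consistency up to the computable bound $b = m(n) \ge y_n$ rather than at $y_n$ itself, and why the uniquely yielding hypothesis must be invoked through an \emph{arbitrary} sequence extension $U$ of the finite witness $\tau$ rather than through $\tau$ directly: the functional $\Phi^\tau$ on its own cannot output $T \upharpoonright y_n$, because by the very definition of the yield $\Phi^{S \upharpoonright n}(y_n)\uparrow$. Aligning the quantifiers in the uniquely yielding implication with the dovetailing witness is the delicate point; the use bound and the halting argument are then routine.
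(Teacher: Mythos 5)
Your proposal is correct and follows essentially the same route as the paper: the paper's $\Gamma^T(n)$ likewise reads $T \upharpoonright m(n)$, searches for a string $x$ with $|x| \geq n$ satisfying $\Phi^x(m(n)) = T \upharpoonright m(n)$, outputs $x \upharpoonright n$, and invokes the uniquely yielding property to conclude $S \upharpoonright n \sqsubseteq x$. Your write-up is in fact somewhat more careful than the paper's at two points the paper glosses over: you explicitly exhibit a halting witness (a long enough prefix of $S$) so the search terminates, and you apply the uniquely yielding implication to an infinite extension $U$ of the accepted string (as the definition, which quantifies over sequences in $\Sigma^{\infty}$, strictly requires), whereas the paper applies it directly to the finite oracle $x$.
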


\begin{proof}
Let $T$ be $m$-uyb-reducible to $S$ by a Turing functional $\Phi$. We define a Turing functional $\Gamma^T$ that operates on an input $n \in \mathbb{N}$ by querying the first $m(n)$ bits of $T$ and searching for a string $x \in \Sigma^*$ such that $|x| \geq n$ and $\Phi^x(m(n)) = T \upharpoonright m(n)$. After finding $x$, $\Gamma^T$ outputs $x \upharpoonright n$. Observe that
\begin{align*}
\Phi^S \upharpoonright \phi^S_{yield}(n) &\sqsubseteq \Phi^S \upharpoonright m(n)\\
													&= T \upharpoonright m(n)\\
													&= \Phi^x(m(n))\\
													&\sqsubseteq \Phi^x.
\end{align*}
Since $\Phi$ is uniquely yielding for $S$ and $|x| \geq n$, $S \upharpoonright n \sqsubseteq x$, which implies that $\Gamma^T(n) = S \upharpoonright n$.
\end{proof}

The following lemma follows directly by the Bounded Use Processing Lemma and Lemma \ref{rev tech}.

\begin{lemma}[Bounded Yield Processing Lemma]
Let $m: \mathbb{N} \rightarrow \mathbb{N}$ be a increasing, computable function. For all $X,Y,Z \in \Sigma^{\infty}$, if $Z$ is $m$-uyb-reducible to $X$, then
\[
mdim(X:Y) \leq mdim(Z:Y)\bigg (\displaystyle\limsup\limits_{n \rightarrow \infty}\frac{m(n)}{n} \bigg)
\]
and
\[
Mdim(X:Y) \leq Mdim(Z:Y)\bigg (\displaystyle\limsup\limits_{n \rightarrow \infty}\frac{m(n)}{n} \bigg),
\]
except when $\bigg (\displaystyle\limsup\limits_{n \rightarrow \infty}\frac{m(n)}{n} \bigg) = \infty$ while either $mdim(Z:Y) = 0$ or $Mdim(Z:Y) = 0$.
\end{lemma}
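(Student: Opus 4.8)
The plan is to reduce the Bounded Yield Processing Lemma directly to the already-proven Bounded Use Processing Lemma (Lemma \ref{proc lemma}) by invoking the translation established in Lemma \ref{rev tech}. The header text even announces this: ``The following lemma follows directly by the Bounded Use Processing Lemma and Lemma \ref{rev tech}.'' So the strategy is almost entirely a matter of correctly threading the hypotheses through the two prior results rather than doing any fresh estimation.

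First I would assume the hypothesis that $Z$ is $m$-uyb-reducible to $X$, where $m$ is increasing and computable. By Lemma \ref{rev tech}, this immediately gives that $X$ is $m$-bT-reducible to $Z$. The point is that the roles of the source and target sequences swap: the reverse direction of the reduction is exactly what converts an inequality bounding $mdim(Z:Y)$ from above into one bounding $mdim(X:Y)$ from above. Then I would apply the Bounded Use Processing Lemma with the substitution of $X$ in the role previously played by $Z$ (the sequence being reduced) and $Z$ in the role previously played by $X$ (the oracle). That application yields
\[
mdim(X:Y) \leq mdim(Z:Y)\bigg(\displaystyle\limsup\limits_{n \rightarrow \infty}\frac{m(n)}{n}\bigg)
\]
and the analogous statement for $Mdim$, which is precisely the desired conclusion. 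The degenerate exception carries over verbatim from Lemma \ref{proc lemma}: the inequality can fail only when the $\limsup$ is infinite while $mdim(Z:Y)=0$ or $Mdim(Z:Y)=0$, since that is the one case excluded in the Bounded Use Processing Lemma after the role swap.

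The only genuine verification, and the step I expect to be the mild obstacle, is confirming that Lemma \ref{rev tech} actually licenses the application of the Bounded Use Processing Lemma with the \emph{same} function $m$. Lemma \ref{rev tech} is stated so that $m$-uyb-reducibility of $Z$ to $X$ yields $m$-bT-reducibility of $X$ to $Z$ with the identical bounding function $m$, and Lemma \ref{proc lemma} requires only that $m$ be increasing and computable. I would check that the Turing functional $\Gamma$ constructed in the proof of Lemma \ref{rev tech} indeed has its use function $\gamma^T_{use}$ bounded by $m$: by construction $\Gamma^T$ on input $n$ queries exactly the first $m(n)$ bits of $T$, so $\gamma^T_{use}(n) \leq m(n)$, matching the $m$-bT requirement. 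With that confirmed, both hypotheses of the Bounded Use Processing Lemma are met and the conclusion transfers with no further computation. I would close by remarking that the asymmetry of the exception (phrased in terms of $Z$ rather than $X$) is exactly what one obtains by substituting the swapped roles into the exception clause of Lemma \ref{proc lemma}.
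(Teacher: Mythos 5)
Your proposal is correct and follows exactly the paper's intended argument: the paper proves this lemma precisely by combining Lemma \ref{rev tech} (converting the $m$-uyb-reduction of $Z$ to $X$ into an $m$-bT-reduction of $X$ to $Z$) with the Bounded Use Processing Lemma applied to the swapped roles. Your additional check that the functional $\Gamma$ from Lemma \ref{rev tech} has use bounded by the same function $m$ is exactly the right point to verify, and it holds by that construction.
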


\begin{definition}
Let $m: \mathbb{N} \rightarrow \mathbb{N}$ be defined by $m(n) = n + c$. A sequence $T \in \Sigma^{\infty}$ is \emph{linear uniquely yield bounded reducible} ($\ell$\emph{-uyb-reducible}) to a sequence $S \in \Sigma^{\infty}$ if $T$ is $m$-uyb-reducible to $S$.
\end{definition}

The following theorem and corollary follow directly from the Bounded Yield Processing Lemma.

\begin{theorem}\label{rdpi}
For all sequences $X,Y,Z \in \Sigma^{\infty}$, if $Z$ is $\ell$-uyb-reducible to $X$, then
\[
mdim(X:Y) \leq mdim(Z:Y)
\]
and
\[
Mdim(X:Y) \leq Mdim(Z:Y).
\]
\end{theorem}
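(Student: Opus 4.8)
The plan is to invoke the Bounded Yield Processing Lemma directly, since $\ell$-uyb-reducibility is by definition the special case $m(n) = n + c$ of $m$-uyb-reducibility. First I would observe that this choice of $m$ meets the lemma's hypotheses: the function $m(n) = n + c$ is both computable and (strictly) increasing, so nothing needs to be checked beyond unwinding the definition of $\ell$-uyb-reducibility to recover an $m$-uyb-reduction with this $m$.

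The single quantitative step is to evaluate the multiplicative factor appearing in the lemma. Here
\[
\limsup_{n \to \infty} \frac{m(n)}{n} = \limsup_{n \to \infty} \frac{n + c}{n} = 1,
\]
since $c$ is a fixed constant and $c/n \to 0$. Feeding this into the conclusion of the Bounded Yield Processing Lemma yields $mdim(X:Y) \le mdim(Z:Y) \cdot 1 = mdim(Z:Y)$ and likewise $Mdim(X:Y) \le Mdim(Z:Y)$, which are exactly the two claimed inequalities.

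The one point that requires a word of care is the degenerate exclusion clause attached to the Bounded Yield Processing Lemma, namely that it is silent when the factor equals $\infty$ while $mdim(Z:Y) = 0$ or $Mdim(Z:Y) = 0$. I would note that this clause is vacuous in the present setting precisely because the factor computed above is $1$, not $\infty$; thus the exceptional case never arises and the lemma applies unconditionally. I do not expect any genuine obstacle: once the factor is seen to be $1$, the result is an immediate specialization, which is why the statement is phrased as following directly from the lemma.
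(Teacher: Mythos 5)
Your proposal is correct and matches the paper's own proof, which simply states that the theorem follows directly from the Bounded Yield Processing Lemma with $m(n) = n + c$; your explicit computation of $\limsup_{n \to \infty} (n+c)/n = 1$ and your check that the exclusion clause is vacuous just spell out the details the paper leaves implicit.
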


\begin{corollary}\label{rgbound}
Let $\alpha \geq 1$. For all sequences $X,Y,Z \in \Sigma^{\infty}$, if $Z$ is $h_{\alpha}$-uyb-reducible to $X$, then
\[
mdim(X:Y) \leq \alpha \cdot mdim(Z:Y)
\]
and
\[
Mdim(X:Y) \leq \alpha \cdot Mdim(Z:Y).
\]
\end{corollary}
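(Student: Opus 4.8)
The plan is to invoke the Bounded Yield Processing Lemma with the specific choice $m = h_{\alpha}$, so the entire argument reduces to instantiating that lemma and evaluating one limit. First I would verify that $h_{\alpha}(n) = \lceil \alpha(n+c) \rceil$ satisfies the lemma's hypotheses: since $\alpha \geq 1$ and $c$ is a fixed natural number, $h_{\alpha}$ is increasing in $n$, and it is a computable function from $\mathbb{N}$ into $\mathbb{N}$ (the ceiling of $\alpha(n+c)$ lands in $\mathbb{N}$, and $\alpha$, $c$, and the ceiling are all effectively available). Consequently, an $h_{\alpha}$-uyb-reduction of $Z$ to $X$ is precisely an $m$-uyb-reduction in the sense the lemma demands, so the lemma applies verbatim to the triple $X,Y,Z$.

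Next I would compute the growth ratio $\limsup_{n \to \infty} h_{\alpha}(n)/n$. Writing $\lceil \alpha(n+c) \rceil = \alpha(n+c) + \epsilon_n$ with $0 \leq \epsilon_n < 1$ gives
\[
\frac{h_{\alpha}(n)}{n} = \alpha + \frac{\alpha c + \epsilon_n}{n},
\]
and since the correction term tends to $0$ as $n \to \infty$, the ordinary limit (hence the $\limsup$) equals $\alpha$. In particular this value is finite, so the degenerate exception in the Bounded Yield Processing Lemma --- which is triggered only when the $\limsup$ equals $\infty$ --- never arises here, and the two inequalities hold unconditionally.

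Finally, substituting $\limsup_{n \to \infty} h_{\alpha}(n)/n = \alpha$ into the bounds supplied by the lemma yields
\[
mdim(X:Y) \leq \alpha \cdot mdim(Z:Y) \quad \text{and} \quad Mdim(X:Y) \leq \alpha \cdot Mdim(Z:Y),
\]
which is exactly the asserted conclusion. Because each step is a direct specialization, there is no substantive obstacle; the only points meriting any attention are confirming that $h_{\alpha}$ is genuinely computable and increasing so that the lemma is applicable, and checking that the ceiling does not perturb the asymptotic ratio --- the bound $0 \leq \epsilon_n < 1$ settles the latter at once.
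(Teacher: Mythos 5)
Your proposal is correct and matches the paper's approach exactly: the paper states that this corollary ``follows directly from the Bounded Yield Processing Lemma,'' and your argument is precisely that instantiation with $m = h_{\alpha}$, including the computation $\limsup_{n \to \infty} h_{\alpha}(n)/n = \alpha$ and the observation that the lemma's $\infty$-exception cannot arise. Your verification that $h_{\alpha}$ is increasing and that the ceiling does not affect the asymptotic ratio supplies exactly the routine details the paper leaves implicit.
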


The end of Section 3 discussed data processing inequalities in terms of the defined outputs of use bounded Turing functionals. In like manner, we describe reverse data processing inequalities in terms of yield bounded Turing functionals.

\begin{definition}
A Turing functional is a \emph{yield bounded functional} (\emph{yb-functional}) if there exists a computable function $f: \mathbb{N} \rightarrow \mathbb{N}$ such that, for every $S \in \Sigma^{\infty}$, $\phi^S_{yield}(n) \leq f(n)$.
\end{definition}


\begin{definition}
A \emph{uniquely yield bounded functional} (\emph{uyb-functional}) is a yield bounded functional that is also uniquely yielding for every oracle.
\end{definition}

For convenience, we say that a Turing functional is a $m$\emph{-uyb-functional} if it is a uyb-functional whose yield is bounded by a computable function $m:\mathbb{N} \rightarrow \mathbb{N}$.


\begin{definition}
Let $m: \mathbb{N} \rightarrow \mathbb{N}$ be defined by $m(n) = n + c$. A Turing functional is a \emph{linear uniquely yield bounded functional} ($\ell$\emph{-uyb-functional}) if it is a $m$-uyb-functional.
\end{definition}

We use Theorem \ref{rdpi} and Corollary \ref{rgbound} to derive the following reverse data processing inequalities for sequences whose transformations are uniquely yield bounded Turing functionals.

\begin{corollary}
For all $\ell$-uyb-functionals $\Phi$ and sequences $S,T \in \Sigma^{\infty}$ where $\Phi^S$ is defined,
\[
mdim(S:T) \leq mdim(\Phi^S:T)
\]
and
\[
Mdim(S:T) \leq Mdim(\Phi^S:T).
\]
\end{corollary}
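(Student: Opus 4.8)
The plan is to derive this corollary directly from Theorem~\ref{rdpi} by unwinding the definitions, so that the statement about the \emph{functional} $\Phi$ becomes a statement about the \emph{reducibility} relation between two concrete sequences. Fix an $\ell$-uyb-functional $\Phi$ and sequences $S,T \in \Sigma^{\infty}$ with $\Phi^S$ defined, and set $Z = \Phi^S$. The goal is to show that $Z$ is $\ell$-uyb-reducible to $S$; once this is in hand, Theorem~\ref{rdpi} applied with $X = S$, $Y = T$, and this $Z$ yields $mdim(S:T) \leq mdim(\Phi^S:T)$ and $Mdim(S:T) \leq Mdim(\Phi^S:T)$ immediately.

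To establish the reducibility, I would verify the three ingredients of $\ell$-uyb-reducibility, all witnessed by the single functional $\Phi$. First, since $\Phi^S$ is defined we have $\Phi^S(n) = Z \upharpoonright n$ for every $n \in \mathbb{N}$, so $Z$ is Turing reducible to $S$ via $\Phi$. Second, because $\Phi$ is an $\ell$-uyb-functional it is in particular an $m$-uyb-functional with $m(n) = n + c$; by the definition of a yield bounded functional the bound $\phi^{R}_{yield}(n) \leq m(n)$ holds for \emph{every} oracle $R$, so in particular $\phi^{S}_{yield}$ is bounded by the computable function $m(n) = n + c$. Third, since $\Phi$ is a uyb-functional it is uniquely yielding for every oracle, and hence uniquely yielding for the specific oracle $S$.

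These three facts are precisely the conditions making $Z$ $m$-uyb-reducible to $S$ for $m(n) = n + c$, which is to say $\ell$-uyb-reducible to $S$. Invoking Theorem~\ref{rdpi} then completes the argument, and the same instance of the theorem delivers both the lower and upper inequalities simultaneously.

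I do not expect any genuine analytic obstacle here, since all the quantitative content has already been discharged in the Bounded Yield Processing Lemma and in Theorem~\ref{rdpi}. The only point requiring care is the bookkeeping distinction between the functional-level definitions (the $\ell$-uyb-functional property, which quantifies over \emph{all} oracles) and the sequence-level reducibility relation that Theorem~\ref{rdpi} takes as hypothesis; the universal quantification over oracles built into ``uyb-functional'' is exactly what lets me specialize both the yield bound and the unique-yielding property to the single oracle $S$ with no additional work.
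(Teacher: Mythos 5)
Your proposal is correct and matches the paper's intended derivation: the paper states this corollary without proof, deriving it from Theorem~\ref{rdpi} exactly as you do, by observing that an $\ell$-uyb-functional's oracle-universal yield bound and unique-yielding property specialize to the oracle $S$, making $\Phi^S$ $\ell$-uyb-reducible to $S$. Your explicit bookkeeping of the three conditions is just a careful spelling-out of what the paper leaves implicit.
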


\begin{corollary}
Let $\alpha \geq 1$. For all $h_{\alpha}$-uyb-functionals $\Phi$ and sequences $S,T \in \Sigma^{\infty}$ where $\Phi^S$ is defined,
\[
mdim(S:T) \leq \alpha \cdot mdim(\Phi^S:T)
\]
and
\[
Mdim(S:T) \leq \alpha \cdot Mdim(\Phi^S:T).
\]
\end{corollary}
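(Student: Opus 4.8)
The plan is to reduce this statement directly to Corollary~\ref{rgbound} by reinterpreting the functional $\Phi$ as the witness of a reduction. First I would fix $\alpha \geq 1$, an $h_{\alpha}$-uyb-functional $\Phi$, and sequences $S,T \in \Sigma^{\infty}$ for which $\Phi^S$ is defined, and then set $Z = \Phi^S$. Since $\Phi^S$ is defined, $Z$ is a genuine sequence in $\Sigma^{\infty}$ that is Turing reducible to $S$ via $\Phi$, so $Z$ is a legitimate third argument to feed into the earlier reduction-level results.

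The key observation, which I would verify by unwinding the definitions, is that $Z$ is $h_{\alpha}$-uyb-reducible to $S$. By definition an $h_{\alpha}$-uyb-functional is a uyb-functional, hence uniquely yielding for \emph{every} oracle and in particular for $S$, and its yield satisfies $\phi^S_{yield}(n) \leq h_{\alpha}(n) = \lceil \alpha(n+c)\rceil$ for all $n$. These are precisely the two requirements for the statement ``$Z = \Phi^S$ is $h_{\alpha}$-uyb-reducible to $S$,'' so no extra construction is needed. With this in hand I would invoke Corollary~\ref{rgbound} with $X = S$, $Y = T$, and $Z = \Phi^S$, obtaining at once
\[
mdim(S:T) \leq \alpha \cdot mdim(\Phi^S:T) \quad\text{and}\quad Mdim(S:T) \leq \alpha \cdot Mdim(\Phi^S:T),
\]
which is exactly the desired conclusion.

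I do not anticipate a substantive obstacle, since the argument is essentially a matter of translating the functional-level hypothesis into the reduction-level hypothesis of Corollary~\ref{rgbound}. The only points meriting care are confirming that the clause ``where $\Phi^S$ is defined'' genuinely supplies a well-defined sequence $Z = \Phi^S \in \Sigma^{\infty}$, and checking that the ``uniquely yielding for every oracle'' condition built into the definition of a uyb-functional indeed covers the particular oracle $S$ at hand; both are immediate from the definitions, so the remaining work is purely bookkeeping.
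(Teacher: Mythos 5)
Your proposal is correct and matches the paper's approach: the paper derives this corollary precisely by viewing $\Phi^S$ as a sequence that is $h_{\alpha}$-uyb-reducible to $S$ (the uyb-functional being uniquely yielding for every oracle, with yield bounded by $h_{\alpha}$) and then invoking Corollary~\ref{rgbound} with $X = S$, $Y = T$, $Z = \Phi^S$. Your verification of the definitional translation is exactly the bookkeeping the paper leaves implicit.
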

\section*{Acknowledgments}
The author would like to thank Xiang Huang, Jack Lutz, Timothy McNicholl, and Donald Stull for useful discussions.

\bibliography{Master}

\end{document}